\crefname{equation}{}{}
\apptocmd{\sloppy}{\hbadness 10000\relax}{}{} % magic BibTeX spacing fix
\crefname{algocf}{Algorithm}{Algorithms}
\crefname{equation}{}{} %remove ``Equation''
\crefname{conjecture}{Conjecture}{Conjectures} %add ``Conjecture''
\colorlet{refkey}{orange!20}
\colorlet{labelkey}{blue!30}
\crefname{algocf}{Algorithm}{Algorithms}
\numberwithin{equation}{section}
\newtheorem{theorem}{Theorem}[section]
\newtheorem{lemma}[theorem]{Lemma}
\crefname{claim}{Claim}{Claims}
\newtheorem*{question*}{Question}
\theoremstyle{definition}
\newtheorem{definition}[theorem]{Definition}
\newtheorem*{definition*}{Definition}
\theoremstyle{remark}
\newcommand{\mb}{\mathbb}
\newcommand{\NN}{\mb{N}}
\title{Social Surplus Maximization in Sponsored Search Auctions Requires Communication \footnote{I thank Yannai Gonczarowski for helpful comments.}}
\author{Suat Evren \footnote{Evren: Departments of Mathematics, Computer Science, and Economics, Massachussets Institute of Technology; email: evrenis@mit.edu.} \\ MIT}
\date{May 2023}
\begin{document}

\maketitle

\begin{abstract}
    We show that computing the optimal social surplus requires $\Omega(mn)$ bits of communication between the website and the bidders in a sponsored search auction with $n$ slots on the website and with tick size of $2^{-m}$ in the discrete model, even when bidders are allowed to freely communicate with each other. 
\end{abstract}

\textbf{Keywords:} communication complexity, sponsored search auctions, online advertising, rearrangement inequality, optimal social surplus

\textbf{2012 ACM Classification:} Theory of computation $\rightarrow$ Communication complexity; Theory of computation $\rightarrow$ Computational pricing and auctions

\textbf{JEL Classification:} D44, D83

% \begin{CCSXML}
% <ccs2012>
%    <concept>
%        <concept_id>10003752.10003777.10003780</concept_id>
%        <concept_desc>Theory of computation~Communication complexity</concept_desc>
%        <concept_significance>500</concept_significance>
%        </concept>
%    <concept>
%        <concept_id>10003752.10010070.10010099.10010107</concept_id>
%        <concept_desc>Theory of computation~Computational pricing and auctions</concept_desc>
%        <concept_significance>300</concept_significance>
%        </concept>
%  </ccs2012>
% \end{CCSXML}

% \ccsdesc[500]{Theory of computation~Communication complexity}
% \ccsdesc[300]{Theory of computation~Computational pricing and auctions}

\section{Introduction}

In the classical model of sponsored search auctions as studied in by \citeauthor{EOS07} (\citeyear{EOS07}), there is a website that has a certain number of slots allocated for advertisement. Each advertisement slot has a publicly known click-through rate between 0 and 1 that is equal to the proportion of the users who will click on the link, and the website allocates the slots by conducting an auction. Bidders are typically companies or publishers who want to advertise their products or content and are assumed to not care about impression but only the number of clicks they get. Also, they are assumed to be risk-neutral, i.e. each have a utility function that is linear in the number of clicks they get. In general, there are two primary questions of economic interest that are well-studied in the field of mechanism design: (1) How should the website allocate the slots so that the social surplus will be maximized? (2) How much the website should charge the bidders so that the mechanism will be strategy-proof? Answering the first question in this model is a direct application of the Rearrangement Inequality (\cite{HLP34}). For the second question of interest, Vickrey-Clarke-Groves (VCG) mechanism (\cite{Vickrey61, Clarke71, Groves73}) fully characterizes the answer. 

We answer the following related question in this paper: Assuming the bidders bid truthfully in this mechanism,\footnote{Note that requiring strategy-proofness restricts uniquely determines the mechanism as VCG mechanism. Here, we require truthfulness for the website to be able decide on what is social surplus maximizer given the bids. In fact, this assumption can be relaxed to requiring monotonicity. That is, an advertiser with a higher valuation per-click bids higher. This does not change our analysis.} how much communication between the bidders and the website is necessary for the website to compute the maximum social surplus, that is the social surplus obtained from the optimal allocation, assuming the bidders can freely communicate with each other? \footnote{We could ask two different versions of this question, each of which are stronger. First, we could ask the question that how much communication between the bidders and the website is necessary for the website to determine the optimal allocation rather than maximum social surplus. Second, we could remove the assumption that bidders can freely communicate with each other. The former is an easier question compared to the one we are taking on, and can be easily answered by using classical methods. The latter requires a multi-party communication model, which is out of the scope of this paper. Yet, the lower bound we give in this paper is tight, which also gives a lower bound for the latter question.} To measure complexity, we use Yao's (\citeyear{Yao79}) model of two-party communication complexity. In particular, we show the following:
\begin{theorem}[Informal, see \ref{formal main theorem} for the formal version] \label{informal main theorem}In a sponsored search auction, for the bidders to send their bids directly to the website provides almost the least amount of communication even if the bidders are allowed to freely communicate between each other and even if the task is to only compute the maximum social surplus rather than finding the exact allocation of slots. 
\end{theorem}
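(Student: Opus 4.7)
My strategy is a range-counting argument: I fix the website's side (the click-through rates) in an adversarial way so that different multisets of bids produce distinct values of the optimal social surplus, and then invoke the general fact that a deterministic protocol with $C$ bits of communication can produce only $2^{C}$ distinct transcripts (hence at most $2^C$ distinct outputs for each fixed input of one party).

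\emph{Step 1 (Reduction to the sorted bid multiset).} By the rearrangement inequality (cited above), for click-through rates $c_1 \ge \cdots \ge c_n \ge 0$ and bids $v_1, \ldots, v_n$, the optimal social surplus equals $\sum_{i=1}^{n} c_i v_{(i)}$, where $v_{(1)} \ge \cdots \ge v_{(n)}$ is the bid vector sorted in decreasing order. Thus the output depends only on the multiset $\{v_1, \ldots, v_n\}$.

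\emph{Step 2 (Separating multisets with click-through rates in $[0,1]$).} I choose $c_i = \beta^{i-1}$ with $\beta = 2^{-(m+2)}$. This is a valid decreasing sequence in $(0,1]$. Writing each bid as $v_j = k_j \cdot 2^{-m}$ with $k_j \in \{0, 1, \ldots, 2^{m}\}$, the optimal surplus becomes $2^{-m}\sum_i \beta^{i-1} k_{(i)}$. The exponential gap gives a base-expansion decoding: if two sorted tuples $(k_{(i)})$ and $(k'_{(i)})$ had the same output, then at the smallest coordinate $i^{*}$ where they differ one would have
\[
\beta^{i^{*}-1} \le \bigl|k_{(i^{*})} - k'_{(i^{*})}\bigr|\beta^{i^{*}-1} = \Bigl|\sum_{i>i^{*}} \beta^{i-1}(k_{(i)} - k'_{(i)})\Bigr| \le 2^{m}\frac{\beta^{i^{*}}}{1-\beta} < \beta^{i^{*}-1},
\]
a contradiction. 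So the map from sorted bid multisets to outputs is injective.

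\emph{Step 3 (From range size to communication lower bound).} The number of sorted multisets of size $n$ over $\{0, 1, \ldots, 2^{m}\}$ is $\binom{2^{m}+n}{n}$, so under the click-through rates fixed above, the optimal-surplus function already takes $\binom{2^{m}+n}{n}$ distinct values as the combined bidder varies its input. In any deterministic two-party protocol that uses $C$ bits of communication, the transcript is one of at most $2^{C}$ strings, and the output is a function of the transcript; thus for each fixed website-input the number of distinct outputs is at most $2^{C}$. Therefore
\[
C \ge \log_{2}\binom{2^{m}+n}{n} \ge n\log_{2}\frac{2^{m}+n}{n} \ge n(m - \log_{2} n) = \Omega(mn)
\]
in the regime $m \ge \log_{2} n$, which matches the trivial upper bound of sending all bids.

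The main obstacle is \emph{Step 2}. The usual "exponentially large weights" trick for embedding a base expansion into a sum is unavailable here because click-through rates must lie in $[0,1]$; one has to work with exponentially small weights and verify with a geometric-series estimate that sorted multisets of bounded-precision bids still produce distinct sums. Once injectivity is in hand, Steps 1 and 3 are routine (the rearrangement inequality plus the transcript-counting bound).
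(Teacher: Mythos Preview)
Your Step~2 does not fit the paper's model. In the discrete model underlying the informal theorem (made precise in Section~\ref{model} and in the formal Theorem~\ref{formal main theorem}), the \emph{click-through rates carry the same tick size $2^{-m}$ as the bids}: after rescaling, both the website's sequence $(a_i)$ and the bidders' sequence $(b_i)$ are integers in $\{0,1,\ldots,2^m-1\}$. Your choice $c_i=\beta^{i-1}$ with $\beta=2^{-(m+2)}$ lies off this grid already at $c_2=2^{-(m+2)}$, and wildly so for $i\ge 3$; the website simply cannot hold that input.

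This is not a cosmetic issue that a different adversarial choice of rates could repair. With both sides on the $m$-bit grid, the optimal surplus $\sum_i a_i b_{(i)}$ is an integer in $\{0,1,\ldots,n(2^m-1)^2\}$, so for any fixed website input the surplus map has range of size at most $n\,2^{2m}$. The number of bid multisets, however, is $\binom{2^m+n-1}{n}$, which exceeds $n\,2^{2m}$ already for $n\ge 3$ and is exponentially larger in the regime of interest. Hence no fixed click-through vector can make the map from bid multisets to surplus values injective, and the range-counting argument can yield at best $C\ge \log_2(n\,2^{2m})=2m+\log_2 n$, far short of $\Omega(mn)$. (A secondary point: even if injectivity held, you would be lower-bounding the problem of \emph{outputting} the surplus value, whereas the formal OSSI is the Boolean threshold question; the inequality goes the wrong way to transfer your bound to OSSI.)

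The paper circumvents this obstruction by working with the Boolean problem directly: it reduces the \emph{Permutation Problem} (decide whether two length-$n$ sequences over $\{0,\ldots,2^m-1\}$ are permutations of each other) to OSSI. Bob sends $\sum_i b_i^2$ using $2m\log_2 n$ bits, Alice sets the threshold $c=\max\{\sum_i a_i^2,\sum_i b_i^2\}$, and one query to OSSI decides permutation-equality via Cauchy--Schwarz plus the rearrangement inequality. A fooling set of size $\binom{2^m+n-1}{n}$ for the Permutation Problem then gives the $\Omega(mn)$ bound without ever needing the website's rates to encode a base expansion.
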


We say "\emph{almost} the best" since the lower bound we provide is asymptotically equal to the communication used in the canonical mechanism in which the bidders send their bids directly to the website. This result is important for three primary reasons. First, communication complexity provides a lower bound on the computational complexity. Although in this paper, we do not focus on computational complexity. Second, the result holds in a permitting setting where we allow bidders to freely communicate with each other for the sake of decreasing communication with the website, moreover with an objective as simple as computing the optimal social surplus rather than finding the optimal matching. Third, it shows that the canonical protocol that is used in practice provides the least amount of communication between the parties. We give the formal version of the main theorem in Section 3 after giving the details of our model.

\subsection{Related Literature}

Sponsored search auctions with a focus on generalized second-price auctions was studied by \citeauthor{EOS07} (\citeyear{EOS07}). Note that our analysis in this paper is not mechanism specific but assumes that bidders are truthful. If we require strategy-proofness, this restricts our space to the VCG mechanism (\cite{Vickrey61, Clarke71, Groves73}). Also note that the matching model we use can be mapped into the basic matching model with utilities-per-matching described by \citeauthor{SS71} (\citeyear{SS71}), and most results can be generalized to that model. 

The communication requirements of indivisible good allocation mechanisms have been extensively analyzed in the previous literature. \citeauthor{Segal07} (\citeyear{Segal07}) studied the communication complexity of social choice rules in a general setup, whose applications include finding the surplus maximizing allocation in combinatorial auctions and also includes finding the stable matching in many-to-one matchings. Focusing on one-to-one matchings, \citeauthor{GNOR19} (\citeyear{GNOR19}) proved that the same lower bound observed in \citeauthor{Segal07} (\citeyear{Segal07}) still holds even when the agents in the same side of the market is allowed to communicate with each other at no expense. In this paper, we focus our attention on the sponsored search auctions. We prove a lower bound in the computation of the surplus maximization rather than finding the optimal allocation, and we also allow the bidders to communicate with each other at no expense. We also show that our lower bound is asymptotically tight, which is not the case for the previously mentioned papers. We would like to note that our proofs are much more concise compared to the proofs of the results from the aforementioned papers.

The problem of communicating information was first discussed by \citeauthor{Hayek45} (\citeyear{Hayek45}) in the economic literature. Some seminal papers that developed a framework to address this problem include, but not limited to, \citeauthor{MR74} (\citeyear{MR74}) and \citeauthor{Hurwicz77} (\citeyear{Hurwicz77}). Note that the model considered in these papers are different from the model we use. Namely, they use \emph{message space dimension} as a proxy for the communication complexity, and the functions they work with are continuous.

 The rest of the paper is organized as follows: In Section \ref{model}, we first give the preliminaries from communication complexity, then we describe the classical model of sponsored search auctions. In the same section, we give the exact details of our modeling and the problem definition. In Section \ref{main results}, we give the formal versions of our main results and their proofs. In Section \ref{conclusion}, we conclude.

\section{The Model and Preliminaries}\label{model}

\subsection{Communication Complexity}

We use the two-party communication complexity model introduced by \citeauthor{Yao79} (\citeyear{Yao79}), whose description is as follows: Consider a situation where two agents, Alice and Bob, want to answer a question by combining the information they separately hold with the least amount of communication. In particular, given $N \in \NN$ and a Boolean function $f : \{0,1\}^N \times \{0,1\}^M \to \{0,1\}$, Alice has an input $x \in \{0,1\}^N,$ and Bob has an input $y \in \{-1,1\}^M$, and they want to compute $f(x,y)$. Computing a function in this context means that in the end, at least one of the agents will know $f(x,y)$, and then by using a single bit, they can send the result to the other agent as well. A \textbf{communication protocol} $P$ is defined to be a distributed algorithm of two processors to compute $f$ for any given pair of inputs $(x,y)$. \textbf{Communication complexity of a protocol} $P$ is defined to be the number of bits need to be exchanged between two parties in protocol $P$ in the worst case for any given $(x,y)$, and it is denoted by $C(P)$. \textbf{Communication complexity of a function} is equal to the minimum communication complexity of any protocol that computes $f$, and it is denoted by $C(f)$. 

Communication complexity is also generalized to the randomized setting, but in this paper we will only consider the deterministic case to keep the proofs concise and insightful. Also, note that some of Yao's definitions are slightly different than ours; we use the modern definitions.

In our proof of the lower bound, we will use two main techniques: (1) reduction, (2) the fooling set method. A fooling set is defined as the following:

\begin{definition}
A set of input pairs $\{(x_1, y_1), (x_2,y_2), \cdots, (x_k,y_k)\}$ is called a \emph{fooling set} with respect to $f$ if there exists $b \in \{0,1\}$ such that
\begin{enumerate}
    \item $\forall i \in [k], f(x_i,y_i) = b$, and 
    \item $\forall i \neq j$, either $f(x_i,y_j) \neq b$ or $f(x_j,y_i) \neq b$.
\end{enumerate}
\end{definition}

In particular, will make use of the following well-known result:
\begin{lemma} \label{fooling set theorem}
If there exists a fooling set of size $k$ with respect to $f$, then
\begin{equation*}
    C(f) \geq \log_2 k.
\end{equation*}
\end{lemma}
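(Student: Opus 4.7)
The plan is to exploit the standard correspondence between deterministic two-party protocols and partitions of the input space into combinatorial rectangles. Concretely, I would first recall the \emph{rectangle lemma}: every deterministic protocol $P$ with $C(P) = c$ induces a partition of $\{0,1\}^N \times \{0,1\}^M$ into at most $2^c$ sets of the form $A \times B$, one for each leaf of the protocol tree, and on each such rectangle the full transcript (and therefore the computed output) is constant. In particular, if $P$ computes $f$, then every rectangle in this partition is $f$-monochromatic.

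Given this, I would argue that a fooling set $\{(x_i, y_i)\}_{i=1}^k$ with target value $b$ cannot have two distinct elements lying inside a single $b$-monochromatic rectangle $A \times B$. Indeed, suppose $(x_i, y_i), (x_j, y_j) \in A \times B$ with $i \neq j$; then $x_i, x_j \in A$ and $y_i, y_j \in B$, so the cross pairs $(x_i, y_j)$ and $(x_j, y_i)$ also lie in $A \times B$. Monochromaticity would then force $f(x_i, y_j) = f(x_j, y_i) = b$, contradicting condition (2) of the definition of a fooling set.

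Combining the two observations, any protocol $P$ computing $f$ must use at least $k$ distinct rectangles just to cover the $k$ fooling-set elements, so $2^{C(P)} \geq k$. Taking base-$2$ logarithms and then minimizing over protocols yields $C(f) \geq \log_2 k$, as claimed. The only step with any real content is the rectangle lemma itself, which is proved by a short induction on the length of the transcript: the set of inputs consistent with a given transcript prefix is always a combinatorial rectangle because at every internal node of the protocol tree exactly one party speaks based only on its own input, so the set of ``surviving'' Alice-inputs and Bob-inputs factor independently. Everything after that is a clean counting argument driven directly by condition (2) of the fooling-set definition, so I do not expect any genuine obstacle beyond importing this standard lemma.
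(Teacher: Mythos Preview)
Your argument is correct and is exactly the standard proof: the rectangle lemma gives at most $2^{C(P)}$ monochromatic rectangles, and the fooling-set condition forces distinct fooling pairs into distinct rectangles, whence $2^{C(f)} \geq k$. The paper does not supply its own proof of this lemma; it simply cites the surveys of \citeauthor{Kushilevitz97} (\citeyear{Kushilevitz97}) and \citeauthor{KN06} (\citeyear{KN06}), where precisely the argument you outline appears, so your proposal is in full agreement with what the paper relies on.
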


For a survey on communication complexity with modern terminology, as well as the proofs and more detailed discussions of the fundamental techniques in the field including a proof of Lemma \ref{fooling set theorem}, see \citeauthor{Kushilevitz97} (\citeyear{Kushilevitz97}). For a more detailed survey of the field, see \citeauthor{KN06} (\citeyear{KN06}).

\subsection{Sponsored Search Auctions and Our Model}

In the classic model of sponsored search auctions, there are $n$ slots $S_1, S_2, \cdots, S_n$ on a web page, and $n$ advertisers $A_1, A_2, \cdots, A_n$. Each slot has a fixed click-through rate $ \alpha_i \in [0,1]$, \footnote{Click-through rates in practice can be determined by using statistical learning on historical data.} which is the (expected) proportion of the users who will click on the link. It is assumed in the general model that advertisers do not care about impression; each advertiser $i$ has a valuation $v_i$ per click, and her utility function is $u_i(S_j) = \alpha_j \cdot v_i $ if she is assigned to slot $S_i$. Without loss of generality, we can assume that $\alpha_1 \geq 
\alpha_2 \geq \cdots \geq \alpha_n$ and $v_1 \geq 
v_2 , \cdots \geq v_n$. Let $\sigma: [n] \to [n]$ be a one-to-one function such that for all $i \in [n]$, $S_{\sigma(i)}$ is the slot $A_i$ is assigned to. Then, the social surplus is equal to 
\begin{equation}
    \sum_i v_i \cdot \alpha_{\sigma(i)} 
\end{equation}
where $x_i$ is the click-through rate of the slot bidder-$i$ is assigned to.

A natural question of economic interest is to ask which allocation maximizes the social surplus, whose answer is also intuitive. Between two advertisers, we need to assign the advertiser with the greater valuation to the slot with the greater click-through rate. The following result is a direct application of the Rearrangement Inequality (\cite{HLP34}), whose proof is by a swapping trick.

\begin{lemma} \label{rearrangement ineq}
Social surplus is maximized when $\sigma(i) = i$ for every $i \in [n]$.
\end{lemma}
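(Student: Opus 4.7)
The plan is to prove the claim by the standard exchange (swapping) argument used to establish the Rearrangement Inequality itself. The goal is to show that any permutation $\sigma$ other than the identity can be modified by a local swap of two coordinates in a way that weakly increases the objective $\sum_i v_i \alpha_{\sigma(i)}$, and that after finitely many such swaps we reach the identity.

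First I would observe that if $\sigma \neq \on{id}$, then there must exist an \emph{inversion}, i.e.\ a pair of indices $i < j$ with $\sigma(i) > \sigma(j)$. Given such a pair, define $\sigma'$ to agree with $\sigma$ everywhere except that $\sigma'(i) = \sigma(j)$ and $\sigma'(j) = \sigma(i)$. The only terms in the surplus that change are the $i$-th and the $j$-th, so the difference satisfies
\begin{equation*}
\sum_k v_k \alpha_{\sigma'(k)} - \sum_k v_k \alpha_{\sigma(k)} \;=\; (v_i - v_j)\bigl(\alpha_{\sigma(j)} - \alpha_{\sigma(i)}\bigr).
\end{equation*}
By the assumed orderings $v_1 \geq \cdots \geq v_n$ and $\alpha_1 \geq \cdots \geq \alpha_n$, together with $i < j$ and $\sigma(i) > \sigma(j)$, both factors on the right are nonnegative, so the swap weakly increases the social surplus.

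Next I would note that this swap strictly decreases the number of inversions of $\sigma$ (the pair $(i,j)$ is removed and no new inversion involving indices outside $\{i,j\}$ is created, since those coordinates are unchanged). Since the number of inversions is a nonnegative integer, iterating the swap must terminate, and the only permutation with zero inversions under the given orderings is the identity. Thus the identity achieves a surplus at least as large as any other $\sigma$, which is the desired conclusion.

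There is no real obstacle here; the argument is entirely elementary, and the only care needed is in checking that the swap strictly reduces the inversion count so the procedure terminates, and that the sign of $(v_i - v_j)(\alpha_{\sigma(j)} - \alpha_{\sigma(i)})$ is computed correctly from the monotonicity hypotheses.
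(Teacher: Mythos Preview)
Your argument is correct and is exactly the ``swapping trick'' the paper invokes; the paper itself does not write out a proof but simply cites the Rearrangement Inequality of Hardy--Littlewood--P\'olya. One minor imprecision worth tightening: swapping a non-adjacent inversion $(i,j)$ \emph{can} create a new inversion with a third index $k$ (e.g.\ for $k<i$ with $\sigma(j)<\sigma(k)<\sigma(i)$ the pair $(k,i)$ becomes an inversion), so your stated reason is not literally true---what is true is that the \emph{net} inversion count still strictly decreases, or alternatively you can always choose an adjacent inversion $j=i+1$, for which your justification holds verbatim.
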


As in most other economic models, it is customary to assume that valuations, bids, and click-through rates are continuous in the classical model of sponsored search auctions. However, in our model, we assume that there is a \textbf{tick size} for values and click-through rates. In particular, we assume that values of bidders per click is a non-negative integer smaller than $2^m$ for a given positive integer $m$, and that each click-through-rate is a non-negative integer multiple of $2^{-m}$. As we have $n$ slots, $2^m$ should be thought as a number much larger than $n$. For example, $m$ would be $64$ for a 64-bit signed computer. The main reason we need this assumption is that in our problem definition to compute communication complexity, we need to represent valuations and click-through-rates as discrete values. In fact, this assumption makes our model more representative of the reality since the bidding process is performed by using computers in practice.

Second, we assume in our model that the bidders bid truthfully. Only then it makes sense to talk about maximizing the surplus. In particular, so long as they bid truthfully, we do not care what mechanism is used to charge the bidders and how much they are charged in the end. Although we know that the VCG mechanism is the unique strategy-proof mechanism, this assumption can in fact be relaxed to requiring monotonicity. That is, an advertiser with a higher valuation per-click bids higher. Since it does not change our analysis, we assume truthfulness for the sake of simplicity in the notation.

We are going to make one further simplification in our problem definition. Instead of considering click-through-rates $1 > \alpha_1 \geq \cdots \geq \alpha_n \geq 0$ with tick size of $2^{-m}$, we will consider non-negative integers $2^m > a_1 \geq a_2 \geq \cdots \geq a_n \geq 0$ with tick size of $1$. Note that there is a one-to-one correspondence between the sequences of click-through-rates $\{\alpha_i\}_{i \in [n]}$ and sequences $\{a_i\}_{i \in [n]}$ with the mapping defined as $a_i = 2^m \alpha_i$ for all $i \in [n]$ for any given sequence $\{\alpha_i\}_{i \in [n]}$. Also, maximizing $\sum_i v_i \cdot \alpha_{\sigma(i)}$ is equivalent to maximizing $\sum_i v_i \cdot a_{\sigma(i)}$. Therefore, this simplification does not make the problem either more or less general, but it only makes the computation easier.

Now, we can give the formal problem definition. For a given pair of positive integers $(m,n)$, we define the \textbf{Optimal Social Surplus Inequality Problem (OSSI)} as follows: There are two parties, the website ($W$) and the bidders ($B$). $W$ is given the sequence $\{a_i\}_{i \in [n]}$ defined as above, and a constant $n \cdot 2^{2m} > c \geq 0$. \footnote{The reason $c < n2^{2m}$ is that $\sum_i v_i \cdot a_{\sigma(i)} < n2^{2m}$ because each term in the sum is less than $2^{2m}$.} $B$ is given the integer sequence $2^m > b_1 \geq b_2 \geq \cdots \geq b_n \geq 0$ corresponding to the bids. Note that these inputs can be represented by $mn + 2m\log_2 n$ and $mn$ bits, respectively. The Boolean function $f : \{0,1\}^{mn + 2m\log_2 n} \times \{0,1\}^{mn} \to \{0,1\}$ is defined as $f(x,y) = 1$ if the maximum social surplus is greater than or equal to $c$, and $f(x,y) = 0$ otherwise. 

In particular, under the assumption that $a_i$ and $b_i$ are decreasing sequences, \footnote{Note that restricting the input space to decreasing sequences does not make any difference because after receiving the input, both $W$ and $B$ can rearrange their inputs to obtain a decreasing sequence without any communication.} the maximum value $\sum_i b_i \cdot a_{\sigma(i)}$ can attain is equal to $\sum_i b_i \cdot a_{i}$ by Lemma \ref{rearrangement ineq}. Therefore, $f(x,y) = 1$ if and only if $\sum_i b_i \cdot a_{i} \geq c$.

Finally, we note that in our model, we take the number of slots to be equal to the number of advertisers. In practice, the number of advertisers is possibly much larger. However, the simplified model we use is equivalent to the general model. In particular, without knowing the exact click-through rates of the slots but only knowing the number $n$ of slots on the website, the advertiser side of the market can find the top $n$ advertisers whose valuations of a click is the greatest, and eliminate all of the remaining advertisers in the first place without requiring any communication between two sides of the market.

\section{Main Results and Their Proofs}\label{main results}
With our model in hand, we can now state the formal version of our main theorem.

\begin{theorem}[Formal Version]\label{formal main theorem}
 Let the tick size in the discrete model be $2^{-m}$, and let there be $n$ slots on the website to be allocated for ads. Then,
 \begin{equation*}
   \log_2 \binom{2^m + n -1}{n} \geq C(OSSI) \geq \log_2 \binom{2^m + n -1}{n} - 2m \log_2 n.   
 \end{equation*}

\end{theorem}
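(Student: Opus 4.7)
The plan is to establish the upper bound via the canonical protocol and the lower bound via the fooling set method (Lemma \ref{fooling set theorem}).

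For the upper bound, the valid bid vectors are decreasing length-$n$ sequences with entries in $\{0, 1, \ldots, 2^m - 1\}$, and these are in bijection with the size-$n$ multisets over a $2^m$-letter alphabet; hence there are exactly $\binom{2^m + n - 1}{n}$ of them. So $B$ can encode its bid vector using $\lceil \log_2 \binom{2^m + n - 1}{n} \rceil$ bits and transmit it to $W$, who then simply checks whether $\sum_i a_i b_i \geq c$.

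For the lower bound, my plan is to exhibit, for each valid bid vector $b$, the input pair $x_b := (a_b, c_b) = (b, \sum_i b_i^2)$ and $y_b := b$. These are legal inputs to OSSI: $a_b = b$ is a valid decreasing $a$-sequence in the correct range, and $c_b = \sum_i b_i^2 \leq n(2^m - 1)^2 < n 2^{2m}$. By construction $f(x_b, y_b) = 1$, so the first fooling-set condition holds with the bit $1$. To verify the second condition, I would assume for contradiction that both $f(x_b, y_{b'}) = 1$ and $f(x_{b'}, y_b) = 1$ for some distinct valid $b, b'$; summing the resulting inequalities $\sum_i b_i b'_i \geq \sum_i b_i^2$ and $\sum_i b_i b'_i \geq \sum_i (b'_i)^2$ rearranges to $\sum_i (b_i - b'_i)^2 \leq 0$, forcing $b = b'$. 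This contradiction certifies a fooling set of size $\binom{2^m + n - 1}{n}$, so Lemma \ref{fooling set theorem} gives $C(OSSI) \geq \log_2 \binom{2^m + n - 1}{n}$, which is in fact strictly stronger than the stated lower bound and so implies it a fortiori.

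The main creative step is the self-referential choice $a_b = b$, $c_b = \sum_i b_i^2$; once this is in hand the verification collapses to a one-line sum-of-squares identity, which I expect to be the only nontrivial part. If one instead worked with a coarser fooling set indexed by only a small number of summary statistics of $(a,c)$ (for instance, fixing $a$ and letting $c$ vary), the extra slack of $2m \log_2 n$ in the stated lower bound would correspond naturally to the bit-length of the discarded data (roughly the length of $c$ or a fixed number of coordinates of $a$), which is presumably what accounts for the gap as written.
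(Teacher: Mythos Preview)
Your proof is correct and in fact establishes a strictly stronger lower bound than the paper does: your direct fooling-set argument gives $C(OSSI)\ge \log_2\binom{2^m+n-1}{n}$, matching the upper bound exactly, whereas the paper only obtains $C(OSSI)\ge \log_2\binom{2^m+n-1}{n}-2m\log_2 n$.

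The route is genuinely different. The paper proceeds indirectly: it first defines an auxiliary \emph{Permutation Problem} (PP), computes $C(PP)=\log_2\binom{2^m+n-1}{n}$ via the fooling set $\{(\gamma,\gamma):\gamma\in G\}$, and then reduces PP to OSSI by having Bob transmit $\sum_i b_i^2$ to Alice (at a cost of $2m\log_2 n$ bits) so that Alice can form the threshold $c=\max\{\sum a_i^2,\sum b_i^2\}$; correctness of the reduction is argued via Cauchy--Schwarz. This extra transmission is precisely the source of the $2m\log_2 n$ slack in the stated bound. Your construction sidesteps the reduction entirely: by setting $a_b=b$ and $c_b=\sum_i b_i^2$, you embed the needed quadratic information directly into the Website's input, so no extra communication is charged. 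The verification then collapses to the elementary identity $\sum_i(b_i-b_i')^2\le 0\Rightarrow b=b'$, which is really the same content as the paper's Cauchy--Schwarz step but applied more cleanly. What the paper's approach buys is a self-contained characterization of PP (which may be of independent interest); what your approach buys is a tight bound on OSSI itself and a shorter proof.
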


Under the assumption that $2^m \gg n$, upper and lower bounds in our result are quite close and both are in the order of $\Omega(mn)$. Therefore, our main theorem implies that $C(OSSI) = \Omega(mn)$. Noting that when all the bidders send their bids to the website, total number of bits communicated is equal to $mn$. Therefore, it implies that even if the bidders are allowed to freely communicate with each other, almost the best they could do in order to find the optimal allocation is to send their bids directly to the website.

The proof of the lower bound in Theorem \ref{formal main theorem} constitutes two parts. First, we introduce another problem which we call the \textbf{Permutation Problem (PP)}, and find a lower bound on its communication complexity by using the fooling set trick. In the second part, we construct a reduction from PP to OSSI. In particular, by using the optimal protocol for OSSI as a building block, we construct a protocol for the Permutation Problem, which gives us an upper bound on the communication complexity of PP in terms of the communication complexity of the OSSI. Therefore, together with the first step, we obtain a lower bound on the communication complexity of OSSI. 

In fact, we do not only find a lower bound on the communication complexity of PP, but fully characterize it. That is, we show that the lower bound is achievable by a certain protocol. And then, by using the same exact protocol for OSSI, we obtain the upper bound in our main theorem. In the next subsection, we define the permutation problem and compute its communication complexity. In the subsection following it, we give a reduction from PP to OSSI, and we end it by completing the proof of Theorem \ref{formal main theorem}.

\subsection{The Permutation Problem (PP)}

For given positive integers $m,n$, we define the \textbf{Permutation Problem (PP)} as follows: Alice is given a sequence of non-negative integers $a_1, a_2, \cdots , a_n$ where $0 \leq a_i < 2^m$ for all $i$. Similarly, Bob is also given a sequence of non-negative integers $b_1, b_2, \cdots , b_n$ where $0 \leq b_i < 2^m$ for all $i$. They try to decide whether $(a_1, a_2, \cdots, a_n)$ is a permutation of $(b_1, b_2, \cdots, b_n)$. In particular, they try to compute the function $f_{PP} : \{0,1\}^{mn} \times \{0,1\}^{mn} \to \{0,1\}$, defined as $f_{PP}(x,y) = 1$ if and only if the sequence corresponding $x$ is a permutation of the sequence corresponding $y$.

We use the following lemma to prove a lower bound on the communication complexity of PP. Let us denote by $G$ the set of $\gamma \in \{0,1\}^{mn}$ such that the corresponding sequence of $\gamma$ is a non-decreasing sequence.

\begin{lemma}\label{a fooling set for PP}
$\{(\gamma, \gamma) : \gamma  \in G \}$ is a fooling set of size $2^m + n -1 \choose n$. 
\end{lemma}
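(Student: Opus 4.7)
The plan is to unpack the three things the definition of a fooling set asks for, and verify them one by one with the set $F = \{(\gamma,\gamma) : \gamma \in G\}$, choosing the constant value $b = 1$.

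First I would count $|G|$. A $\gamma \in G$ encodes a non-decreasing sequence $0 \leq a_1 \leq a_2 \leq \cdots \leq a_n < 2^m$, and such sequences are in bijection with size-$n$ multisets drawn from the $2^m$-element set $\{0,1,\dots,2^m-1\}$. The standard stars-and-bars count gives exactly $\binom{2^m + n - 1}{n}$ such multisets, which is also $|F|$.

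Next I would check the two fooling-set conditions. For (1), clearly $f_{PP}(\gamma,\gamma) = 1$ for every $\gamma$, since a sequence is trivially a permutation of itself. For (2), fix two distinct $\gamma, \gamma' \in G$; I want to show $f_{PP}(\gamma, \gamma') = 0$ (this is even stronger than what the definition requires, and uses the symmetry of the ``is a permutation of'' relation, so the other off-diagonal case will follow automatically). The key observation is that if $\gamma$ and $\gamma'$ were permutations of each other, then sorting each of them into non-decreasing order would yield the same sequence; but both are already non-decreasing, so sorting leaves them fixed, forcing $\gamma = \gamma'$, a contradiction. Hence $f_{PP}(\gamma, \gamma') = 0 \neq b$, which completes condition (2).

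There isn't really a hard step here — the only thing to be mindful of is the uniqueness of the sorted representative of a multiset, which is what lets distinct non-decreasing sequences fail to be permutations of one another. Once that is invoked, the size bound is pure stars-and-bars, and the lower bound $C(f_{PP}) \geq \log_2 \binom{2^m + n - 1}{n}$ follows immediately by combining \Cref{a fooling set for PP} with \Cref{fooling set theorem}.
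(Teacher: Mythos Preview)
Your proof is correct and follows essentially the same approach as the paper: both arguments hinge on the observation that two non-decreasing sequences that are permutations of each other must coincide, and both obtain the size via stars-and-bars. Your write-up is simply more explicit about verifying the two fooling-set conditions separately.
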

\begin{proof}
The proof that it is a fooling set is immediate from the fact that two non-decreasing sequences are permutations of each other only if these are the two sequences are exactly the same. The size of the set is the number of solutions to the equation $z_0 + z_1 + z_2 + \cdots + z_{2^m-1} = n$ where $z_i$ is equal to the number of elements in the corresponding sequence that is equal to $i$. This is a classical stars and bars problem, which has $2^m + n -1 \choose n$ different solutions in the set of non-negative integers.
\end{proof}

The following theorem fully characterizes the Permutation Problem.

\begin{theorem} \label{communication complexity of PP}
$C(PP) = \log_2 {2^m + n -1 \choose n}.$
\end{theorem}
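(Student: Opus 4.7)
The plan is to establish $C(PP) = \log_2 \binom{2^m+n-1}{n}$ by proving matching upper and lower bounds. The lower bound is essentially free at this point. By \Cref{a fooling set for PP}, the set $\{(\gamma, \gamma) : \gamma \in G\}$ is a fooling set for $f_{PP}$ of size $\binom{2^m + n - 1}{n}$. Applying \Cref{fooling set theorem} directly yields
\begin{equation*}
    C(PP) \geq \log_2 \binom{2^m + n - 1}{n}.
\end{equation*}

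For the upper bound, I would exhibit an explicit protocol. The key observation is that two sequences are permutations of one another if and only if their sorted versions coincide, and sorting is a local operation that requires no communication. So Alice and Bob each locally sort their inputs into non-decreasing order; call the results $\tilde a, \tilde b \in G$. By the stars-and-bars count recorded in the proof of \Cref{a fooling set for PP}, the set $G$ of non-decreasing sequences of length $n$ with entries in $\{0,1,\dots,2^m-1\}$ has cardinality exactly $\binom{2^m+n-1}{n}$. Fix any bijection between $G$ and $\{0,1,\dots,\binom{2^m+n-1}{n}-1\}$ known to both parties in advance. Alice sends the index of $\tilde a$ under this bijection, using $\lceil \log_2 \binom{2^m+n-1}{n}\rceil$ bits; Bob decodes, compares to his own $\tilde b$, and outputs $1$ iff they are equal. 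Correctness is immediate, and the total communication meets the claimed bound (the statement is read modulo the usual rounding-to-an-integer convention, matching how the authors phrase the bounds in \Cref{formal main theorem}).

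There is really no serious obstacle here: the lower bound is a one-line consequence of the fooling set lemma applied to the construction already supplied, and the upper bound just requires noticing that the canonical representative of each equivalence class under permutation is its sorted version, together with the fact that the number of such representatives is precisely the binomial coefficient on the right-hand side. The only mild care needed is in pointing out that local sorting incurs no communication cost, so the entire message budget can be spent on encoding a single element of $G$, which is exactly what matches the fooling-set lower bound.
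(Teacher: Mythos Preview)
Your proposal is correct and follows essentially the same approach as the paper: the lower bound via \Cref{fooling set theorem} and \Cref{a fooling set for PP}, and the upper bound by encoding the sorted sequence as an index into $G$ via a pre-agreed bijection and transmitting it in $\log_2 \binom{2^m+n-1}{n}$ bits. You are slightly more explicit than the paper about the local sorting step and the ceiling convention, but the argument is the same.
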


\begin{proof}
We prove the theorem by showing that $C(PP) \geq \log_2 {2^m + n -1 \choose n}$ and $C(PP) \leq \log_2 {2^m + n -1 \choose n}.$ The lower bound immediately follows from Lemma \ref{fooling set theorem} and Lemma \ref{a fooling set for PP}.

For the upper bound, we show that a canonical protocol for PP has communication complexity of $\log_2 {2^m + n -1 \choose n}$. As in the proof of Lemma \ref{a fooling set for PP}, the number of sequences that none of which are permutations of each other is equal to the number of solutions to the equation 
$$z_0 + z_1 + z_2 + \cdots + z_{2^m-1} = n$$
in the set of non-negative integers, which is $\log_2 {2^m + n -1 \choose n}$. We define the optimal protocol for PP in a canonical way as follows. Fix a bijective function $g: G \to \left[\log_2 {2^m + n -1 \choose n}\right]$ that is known to both Website and Bidders before the communication stars as a part of the protocol. Then, for any input $y$, Bidders sends $g(y)$ to Website by using $\log_2 {2^m + n -1 \choose n}$. Finally, Website recovers $y$ from $g(y)$ by computing $g^{-1}(g(y))$, and computes $f_{PP}(x,y)$.
\end{proof}

\subsection{Reduction from the Permutation Problem}
\paragraph*{Reduction from PP to OSSI.} We define the reduction as follows. For any input $(x,y)$ for the permutation problem, Bob computes $b_1^2 + b_2^2 + \cdots + b_n^2$ and sends the result to Alice by using $2m \log_2n$ bits, and inputs $(b_1, b_2, \cdots, b_n)$ to the protocol for OSSI as a proxy of the Bidders. Alice finds her corresponding sequence $a_1, a_2, \cdots, a_n$, computes $ \max \{\sum_{i = 1}^{n} a_i^2, \sum_{i = 1}^{n} b_i^2\}$, and inputs $(a_1, a_2, \cdots, a_n, \max \{\sum_{i = 1}^{n} a_i^2, \sum_{i = 1}^{n} b_i^2\})$ to the protocol for OSSI as a proxy of the website. Then, our protocol for PP outputs the result we get from this query to the OSSI protocol. 

Now, let us prove why this protocol outputs 1 if and only if two corresponding sequences are permutations of each other. 

\begin{lemma}\label{reduction-1}
If the two sequences are permutations of each other, then the protocol outputs 1.
\end{lemma}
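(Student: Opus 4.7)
The plan is to unpack what the reduction feeds into the OSSI protocol when the two input sequences are permutations of each other, and then to check that the maximum social surplus on this instance meets the threshold exactly.

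First, I would note that the sum of squares of a sequence is invariant under permutations of its entries. Thus, if $(a_1, \ldots, a_n)$ is a permutation of $(b_1, \ldots, b_n)$, then $\sum_{i=1}^{n} a_i^2 = \sum_{i=1}^{n} b_i^2$, and consequently the threshold $c = \max\{\sum_i a_i^2, \sum_i b_i^2\}$ passed to OSSI equals this common value, which I will call $T$.

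Second, I would compute the maximum social surplus that OSSI sees on input $\bigl((a_i),(b_i),c\bigr)$. As remarked after the formal definition of OSSI, both parties can re-sort their sequences in decreasing order without any communication, and by \Cref{rearrangement ineq} the maximum surplus is $\sum_{i=1}^{n} a_{(i)}\, b_{(i)}$, where $a_{(i)}$ and $b_{(i)}$ denote the two sequences after sorting in decreasing order. Since $(a_i)$ is a permutation of $(b_i)$, their sorted versions coincide pointwise, i.e.\ $a_{(i)} = b_{(i)}$ for every $i \in [n]$. Therefore the maximum social surplus equals $\sum_{i=1}^{n} a_{(i)}^2 = T = c$.

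Finally, because the maximum social surplus is at least $c$ (in fact equal to $c$), the OSSI protocol by definition outputs $1$, and hence so does our reduction-based protocol for PP. This direction of the reduction is essentially mechanical once one recognizes that the threshold $c$ is engineered precisely so that equality with the maximum surplus holds exactly in the permutation case, so I do not anticipate a real obstacle; the interesting content will lie in the converse direction, where one must exploit the strict form of the rearrangement inequality to rule out matching the threshold when the two sequences differ as multisets.
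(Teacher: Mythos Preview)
Your argument is correct and matches the paper's own proof: both observe that when the sequences are permutations of each other the sorted versions coincide, so the maximum rearranged sum equals $\sum a_i^2 = \sum b_i^2 = c$, and OSSI therefore outputs $1$. (Your closing remark about the converse relying on strict rearrangement is off---the paper actually uses Cauchy--Schwarz there---but that does not affect the present lemma.)
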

\begin{proof}
Note that if the two sequences are permutations of each other, then by Lemma \ref{rearrangement ineq} the maximum value that a rearranged sum can attain is equal to 
\begin{equation}
  \sum_{i = 1}^{n} a_i b_i = \sum_{i = 1}^{n} a_i^2 = \sum_{i = 1}^{n} b_i^2.  
\end{equation} 
Therefore, it will be the equality case and the query to the OSSI protocol will output 1.
\end{proof}

\begin{lemma}\label{reduction-2}
If the protocol outputs 1, then the two sequences are permutations of each other.
\end{lemma}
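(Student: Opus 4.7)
My plan is to prove the contrapositive by extracting equality conditions from a chain of inequalities starting from the OSSI output. Let $a_{(1)} \geq a_{(2)} \geq \cdots \geq a_{(n)}$ and $b_{(1)} \geq b_{(2)} \geq \cdots \geq b_{(n)}$ denote the sorted versions of the two sequences, and write $c := \max\{\sum_i a_i^2, \sum_i b_i^2\}$ for the threshold that Alice feeds into the OSSI protocol. The hypothesis that the protocol outputs $1$ means, by the definition of OSSI and \Cref{rearrangement ineq}, that
\begin{equation*}
\sum_{i=1}^{n} a_{(i)} b_{(i)} \;\geq\; c \;=\; \max\!\left\{\sum_{i=1}^n a_i^2,\; \sum_{i=1}^n b_i^2\right\}.
\end{equation*}

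The core idea is to sandwich the left-hand side between $c$ and itself via two classical inequalities. First, Cauchy--Schwarz gives
\begin{equation*}
\sum_{i=1}^{n} a_{(i)} b_{(i)} \;\leq\; \sqrt{\Big(\textstyle\sum_i a_i^2\Big)\Big(\sum_i b_i^2\Big)}.
\end{equation*}
Second, for any non-negative reals $x,y$ one has $\sqrt{xy} \leq \max\{x,y\}$. Chaining these with the hypothesis forces equality in both steps.

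From equality in $\sqrt{xy} = \max\{x,y\}$ applied to $x=\sum_i a_i^2$ and $y=\sum_i b_i^2$, I conclude $\sum_i a_i^2 = \sum_i b_i^2$. From equality in Cauchy--Schwarz, the vectors $(a_{(1)},\ldots,a_{(n)})$ and $(b_{(1)},\ldots,b_{(n)})$ are proportional; say $a_{(i)} = \lambda b_{(i)}$ for some $\lambda \geq 0$ (or both vectors are identically zero, in which case the sequences trivially agree). Substituting into the equal-norm identity gives $\lambda^2 \sum_i b_i^2 = \sum_i b_i^2$, so $\lambda = 1$ whenever not all $b_i$ vanish. Hence $a_{(i)} = b_{(i)}$ for every $i$, which is precisely the statement that the original sequences $(a_1,\ldots,a_n)$ and $(b_1,\ldots,b_n)$ are permutations of each other.

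The only subtlety I anticipate is the proportionality-constant bookkeeping at the equality step of Cauchy--Schwarz (including the degenerate all-zero case), but this is routine. Everything else is a direct chain of well-known inequalities together with \Cref{rearrangement ineq}, which has already been invoked to identify the maximum-surplus value.
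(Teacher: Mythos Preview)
Your proof is correct and follows essentially the same approach as the paper: both arguments sandwich $\sum_i a_{(i)} b_{(i)}$ between $\max\{\sum a_i^2,\sum b_i^2\}$ and itself via Cauchy--Schwarz and the elementary bound relating the geometric mean to the maximum, then read off the equality conditions. The only cosmetic difference is that the paper squares throughout (writing $xy \leq (\max\{x,y\})^2$ and $(\sum a_{(i)} b_{(i)})^2 \leq \sum a_i^2 \sum b_i^2$) whereas you keep the square roots; your handling of the degenerate all-zero case is in fact slightly more careful than the paper's.
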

\begin{proof}
Let $(a_i^*)_{1 \leq i \leq n}$ be the rearranged version of $(a_i)_{1 \leq i \leq n}$ such that $a_1^* \geq a_2^* \geq \cdots \geq a_n^*$. Let $(b_i^*)_{1 \leq i \leq n}$ be defined similarly. By the Rearrangement Inequality, the maximum value a rearranged sum can attain is $\sum_{i =1}^{n} a_i^* b_i^*$. 

Since the query to the OSSI protocol outputs 1,
\begin{equation}
    \max \{\sum_{i = 1}^{n} a_i^2, \sum_{i = 1}^{n} b_i^2\} \leq \sum_{i =1 }^{n} a_i^* b_i^*
\end{equation}
which implies that
\begin{equation}\label{query ineq}
    \sum_{i =1}^{n} a_i^2 \sum_{i =1}^{n} b_i^2 
    \leq 
    (\max \{\sum_{i = 1}^{n} a_i^2, \sum_{i = 1}^{n} b_i^2\})^2
    \leq 
    (\sum_{i =1}^{n} a_i^* b_i^*)^2
\end{equation}
On the other hand, by Cauchy-Schwarz,
\begin{equation} \label{Cauchy ineq}
      (\sum_{i =1}^{n} a_i^* b_i^*)^2 \leq \sum_{i =1}^{n} (a_i^*)^2 \sum_{i =1}^{n} (b_i^*)^2 = \sum_{i=1}^{n} a_i^2 \sum_{i=1}^{n} b_i^2 
\end{equation}
Therefore, the equality case must be satisfied in all of the inequalities in \ref{query ineq} and \ref{Cauchy ineq}. Equality case in the Cauchy-Schwarz inequality is satisfied if and only if the vector $(a_1^*, a_2^*, \cdots, a_n^*)$ is a scalar multiple of the vector $(b_1^*, b_2^*, \cdots, b_n^*)$. For the equality case in \ref{query ineq},
\begin{equation}
    \sum_{i=1}^{n} a_i^2 = \sum_{i=1}^{n} b_i^2 
\end{equation}
which implies that $c=1$, and thus $a_i^* = b_i^*$ for all $i \in [n]$, which implies that $(a_i)_{1 \leq i \leq n}$ and $(b_i)_{1 \leq i \leq n}$ are permutations of each other.
\end{proof}

Lemma \ref{reduction-1} and Lemma \ref{reduction-2} together show that the reduction we defined from PP to OSSI is indeed a valid reduction. Now, we are ready to complete the proof of our main result.

\begin{proof}[Proof of Theorem \ref{formal main theorem}]
The communication complexity of the protocol for PP given above by using the optimal protocol for OSSI as black-box is equal to $C(OSSI) + 2m \log_2 n$ because before querying the OSSI protocol, the Bob sends Alice the sum $\sum_{i = 1}^{n} b_i^2$ by using $2m \log_2 n$ bits. By Theorem \ref{communication complexity of PP}, it $C(OSSI) \geq \log_2 {2^m + n -1 \choose n} - 2m \log_2 n$.

For the upper bound, we need to give a protocol. It is easy to see that the same exact protocol we used for the upper bound in the proof of Theorem \ref{communication complexity of PP} also works for OSSI, which completes the proof.
\end{proof}

\section{Discussion}\label{conclusion}

In this paper, we seek to answer the question of how much communication is required between advertisers and the website for finding the optimal social surplus in the sponsored search auctions, otherwise known as online ads market, which was studied by \citeauthor{EOS07} (\citeyear{EOS07}) in the literature. In this context, we analyze the discrete communication complexity of two related problems by using the model described in \citeauthor{Yao79} (\citeyear{Yao79}), namely the Permutation Problem (PP) and the Optimal Social Surplus Inequality (OSSI). We first find the exact communication complexity of PP by showing that when two agents are given a tuple of numbers and want to find whether the tuples they have are permutations of each other with the least amount of communication, the best they can do is for one of them to send the whole tuple to the other. Second, by using a reduction from OSSI to PP, we prove a tight lower bound on the complexity of the communication required between the website and the bidders in the problem of finding the maximum social surplus in a sponsored search auction. 

Our results can similarly be extended to a setting where the advertisers observe decreasing marginal returns, i.e. their utility function is concave in the number of clicks they receive. Further extensions of this paper include the analysis in the continuous setting, indeterministic and probabilistic settings. The results may also be generalized to the matching model described by \citeauthor{SS71} (\citeyear{SS71}).

An interesting question to answer would be to measure the communication complexity of the revenue under VCG mechanism (\cite{Vickrey61, Clarke71, Groves73}). We believe that similar proof techniques would extend, but perhaps with slightly more tedious math. 

It would also be interesting to analyze it under a Bayesian setting where the website ex-ante does not know the bids but it has a belief of distributions where it tries to estimate the ex-ante expected revenue. This question could have more economic implications since the website needs to have an estimate of the revenue before it commits to enter the market.

\printbibliography

\end{document}